\newtheorem{thm}{Theorem}[section]
\newtheorem{cor}[thm]{Corollary}
\newtheorem{claim}[thm]{Claim}
\newtheorem{theorem}[thm]{Theorem}
\newcommand{\RR}{\ensuremath{\mathbb{R}}}
\newcommand{\NN}{\ensuremath{\mathbb{N}}}
\newcommand{\E}{\mathbb{E}}
\newcommand{\suchthat}{\mathrel{:}}
\newcommand{\cone}{\ensuremath{\mathsf{cone}}}
\newcommand{\dotp}[2]{\langle #1, #2 \rangle}
\newcommand{\eps}{\epsilon}
\newcommand{\conv}{\operatorname{conv}}
\newcommand{\calS}{\ensuremath{\mathcal{S}}}
\def\final{1}  
\newcommand{\lnote}[1]{[{\small Luis: \bf #1}]\marginpar{*}}
\newcommand{\nnote}[1]{[{\small Navin: \bf #1}]\marginpar{*}}
\newcommand{\snote}[1]{[{\small Santosh: \bf #1}]\marginpar{*}}
\newcommand{\sidecomment}[1]{\marginpar{\tiny #1}}
\newcommand{\lnote}[1]{}
\newcommand{\nnote}[1]{}
\newcommand{\snote}[1]{}
\newcommand{\sidecomment}[1]{}
\newcommand{\email}[1]{\href{mailto:#1}{\texttt{#1}}}
\newcommand{\floor}[1]{\left\lfloor#1\right\rfloor}
\newcommand{\card}[1]{\lvert#1\rvert}
\newcommand{\dimd}{{d}}
\newcommand{\dimdm}{{d-1}}
\title{Lower Bounds for the Average and Smoothed Number of Pareto Optima}
\date{}
\author{Navin Goyal\\
Microsoft Research India \\
\email{navingo@microsoft.com}
\and
Luis Rademacher \\
Computer Science and Engineering \\
Ohio State University \\
\email{lrademac@cse.ohio-state.edu}}
\begin{document}

\maketitle
\begin{abstract}

Smoothed analysis of multiobjective $0$--$1$ linear optimization has drawn considerable attention recently. The number of Pareto-optimal solutions (i.e., solutions with the property that no other solution is at least as good in all the coordinates and better in at least one) for multiobjective optimization problems is the central object of study. In this paper, we prove several lower bounds for the expected number of Pareto optima. Our basic result is a lower bound of $\Omega_d(n^{d-1})$ for optimization problems with $d$ objectives and $n$ variables under fairly general conditions on the distributions of the linear objectives. Our proof
relates the problem of lower bounding the number of Pareto optima to results in geometry connected to arrangements of hyperplanes.
We use our basic result to derive (1) To our knowledge, the first lower bound for natural multiobjective optimization problems. We
illustrate this for the maximum spanning tree problem with randomly chosen edge weights. Our technique is sufficiently flexible to yield such lower bounds for other standard objective functions studied in this setting (such as,  multiobjective shortest path, TSP tour, matching). (2) Smoothed lower bound of $\min \{ \Omega_d( n^{d-1.5} \phi^{(d-\log d) (1-\Theta(1/\phi))}), 2^{\Theta(n)} \}$ for the $0$--$1$ knapsack problem with $d$ profits for $\phi$-semirandom distributions for a version of the knapsack problem. This improves the recent lower bound of Brunsch and R{\"o}glin.

\end{abstract}


\section{Introduction}

Multiobjective optimization involves scenarios where there is more than one objective function to optimize: When planning a train trip we may want to choose connections that minimize fare, total time, number of train changes, etc.  The objectives may conflict with each other and there may not be a single best solution to the problem. Such multiobjective optimization problems arise in diverse fields ranging from economics to computer science, and have been well-studied. A number of approaches exist in the literature to deal with the trade-offs among the objectives in such situations: Goal programming, multiobjective approximation algorithms, Pareto-optimality; see, e.g., \cite{GrandoniRS09, MoitraO11, RoglinT09} for references. It is the latter approach using Pareto-optimality that concerns us in this paper. A Pareto-optimal solution is a solution with the property that no other solution is at least as good in all the objectives and better in at least one. Clearly, the set of
Pareto-optimal solutions (Pareto set in short) contains all desirable solutions as any other solution is
strictly worse than a solution in the Pareto set. In the worst case, the Pareto set can be exponentially large as a function of the input size (see, e.g., \cite{Muller-HannemannW01}). However, in many cases of interest, the Pareto set is typically not too large. Thus, if the Pareto set is small and can be generated efficiently, then it can be treated possibly with human assistance to choose among the few alternatives. Pareto sets are also used in heuristics for optimization problems (e.g. \cite{NemhauserU69}). To explain why Pareto sets are frequently small in practice, multiobjective optimization has recently been studied from the view-point of smoothed analysis~\cite{SpielmanT01}. We introduce some notation before describing this work.


\paragraph{Notation.} For positive integer $n$, we denote the set $\{1, 2, \ldots, n\}$ by $[n]$. The multiobjective optimization problems we study have binary variables and linear objective functions. In a general setting, the feasible solution set is an
arbitrary set $\calS \subseteq \{0,1\}^n$. The problem has $d$ linear objective functions $v^{(i)} : \calS \rightarrow \mathbb{R}$, given by
$v^{(i)}(x) = \sum_{j \in [n]} v^{(i)}_j x_j$, for $i \in [d]$, and $(v^{(i)}_1, \ldots, v^{(i)}_n) \in \RR^n$ (so $v^{(i)}$ is also interpreted as an $n$-dimensional vector in the natural way). For convenience, we will assume, unless otherwise specified, that we want to maximize all the objectives, and we will refer to the objectives as profits. This entails no loss of generality. Thus the optimization problem is the following.
\begin{align} \label{mprob}
&\text{maximize } v^{(1)}(x), \ldots, \text{maximize } v^{(d)}(x), \\
&\text{subject to: } x \in \calS.   \nonumber
\end{align}

For the special case of the multiobjective $0$--$1$ knapsack problem we have $d+1$ objectives: One of the objectives will be the weight $w=(w_1, \ldots, w_n)$, which should be minimized, and the other $d$ objectives will be the profits as before: $v^{(i)}=(v^{(i)}_1, \ldots, v^{(i)}_n)$ for $i \in [d]$, and all the entries in $w$ and $v^{(i)}$ come from $[0,1]$.

Let $V$ be the $d \times n$ matrix with rows $v^{(1)}, \dotsc, v^{(d)}$.
We will use the partial order $\preceq$ in $\RR^d$ defined by $x \preceq y$ iff for all $i \in [d]$ we have $x_i \leq y_i$. For $a, b \in \RR^{d}$ we say that $b$ dominates $a$ if $b_i \geq a_i$ for all $i \in [d]$, and for at least one $i \in [d]$, we have strict inequality.
We denote the relation of $b$ dominating $a$ by $b \succ a$.
A solution $x \in \calS$ is said to be Pareto-optimal (or maximal under $\preceq$) if $Vx \nprec Vy$ for all $y \in \calS$. For the knapsack problem, we need to modify the definition of domination appropriately because while we want to maximize the profit, we want to minimize the weight. It will be clear from the context which notion is being used.
For a set of points $X$ in Euclidean space, let $p(X)$ denote the number of Pareto-optima in $X$.


\paragraph{Smoothed analysis.} For our multiobjective optimization problem \eqref{mprob}, in the worst case the size of the Pareto set can be exponential even for $d=2$ (the bicriteria case). Smoothed analysis is a framework for analysis of algorithms introduced by Spielman and Teng~\cite{SpielmanT01} to explain the fast running time of the Simplex algorithm in practice, despite having exponential running time in the worst case.  Beier and  V{\"o}cking~\cite{BeierV04} studied bicriteria $0$--$1$ knapsack problem under smoothed analysis.
In our context of multiobjective optimization, smoothed analysis would mean that the instance (specified by $V$) is chosen adversarially, but then each entry is independently perturbed according to, say, Gaussian noise with small standard deviation. In fact, Beier and V{\"o}cking~\cite{BeierV04} introduced a stronger notion of smoothed analysis. In one version of their model, each entry of the matrix $V$ is an independent random variable taking values in $[0,1]$ with the restriction that each has probability density function bounded above by $\phi$, for a parameter $\phi \geq 1$. We refer to distributions supported on $[-1,1]$ with probability density bounded above by $\phi$ as $\phi$-semirandom distributions. For more generality, one of the rows of $V$ could be chosen fully adversarially (deterministically). As $\phi$ is increased, the semirandom model becomes more like the worst case model.
With the exception of Theorem~\ref{thm:phi} below, we do not require adversarial choice of a row in $V$.

\paragraph{Previous work.} Beier and V{\"o}cking~\cite{BeierV04} showed that in the above model for the bi-criteria version of the $0$--$1$ knapsack problem with adversarial weights the expected number of Pareto optima is $O(\phi n^4)$; this was improved to $O(\phi n^2)$ by \cite{BeierRV07}.
R{\"o}glin and Teng~\cite{RoglinT09} studied the multiobjective optimization problem in the above framework. They showed that the expected size of the Pareto set with $d$ objectives is of the form $O((\phi n)^{2^{d-2}(d+1)!})$.
Moitra and O'Donnell~\cite{MoitraO11} improved this upper bound to $2\cdot (4 \phi d)^{d(d+1)/2}\cdot n^{2d}$. These authors \cite{RoglinT09, MoitraO11} raised the question of lower bound on the expected number of Pareto optima. Again, these results allow one of the objectives to be chosen adversarially.

An early
average-case lower bound of $\Omega(n^2)$ was proven in \cite{BeierV04RK} for the knapsack problem with a single profit vector. Their
result however required an adversarial choice of exponentially increasing weights.
Recently, Brunsch and R{\"o}glin~\cite{BRoglin11} proved lower bounds of the form 
\[
\Omega_d(\min \{(n\phi)^{(d-\log_2{d})\cdot (1-\Theta(1/\phi))}, 2^{\Theta(n)}\}),
\] 
where $\Omega_d$ means that the constant in the asymptotic notation may depend on $d$. Unfortunately, the instances constructed by them use $\calS$ that does not seem to correspond to natural optimization problems.

\paragraph{Our results.} In this paper we prove lower bounds on the expected number of Pareto optima. Our basic result deals with the case when every entry in the matrix $V$ is chosen independently from a distribution with density symmetric around the origin. Note that we do not require that the distributions be identical: Each entry can have different distribution but we require that the distributions have a density. This generality will in fact be useful in the proof of Theorem~\ref{thm:trees}. Note also that all entries of $V$ are random unlike the results discussed above where one of the objectives is chosen adversarially. This makes our lower bound stronger.

\begin{theorem}[Basic theorem]\label{thm:main}
Suppose that each entry of a $d \times n$ random matrix $V$ is chosen independently according to (not necessarily identical) symmetric distributions with a density. Let $X$ denote the random set $\{Vr : r \in \{0,1\}^n\}$. Then
\begin{align}
\E_V \: p(X) \geq \frac{1}{2^{d-1}} \sum_{k=0}^{d-1} \binom{n-1}{k}.
\end{align}
\end{theorem}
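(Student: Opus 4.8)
\begin{proofidea}
The plan is to lower bound $p(X)$ by exhibiting many distinct Pareto optima of a special kind: maximizers of strictly positive linear functionals. Write $V_1,\dots,V_n\in\RR^d$ for the columns of $V$, so that $X=\bigl\{\sum_{j\in S}V_j : S\subseteq[n]\bigr\}$. Fix a direction $c$ in the open positive orthant $\RR^d_{>0}$ with $\langle c,V_j\rangle\ne 0$ for all $j$. Then $x^{*}(c):=\sum_{j:\langle c,V_j\rangle>0}V_j$ maximizes $\langle c,\cdot\rangle$ over $X$, hence is Pareto-optimal (if $y\in X$ dominated $x^{*}(c)$ then $\langle c,y-x^{*}(c)\rangle>0$ since $c\succ 0$, contradicting maximality). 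The point $x^{*}(c)$ depends on $c$ only through the sign pattern $s(c):=\mathrm{sgn}(V^{\mathsf T}c)\in\{\pm1\}^{n}$, equivalently only through which cell of the central arrangement $\mathcal{A}=\{\,c'\in\RR^d : \langle c',V_j\rangle=0\,\}_{j\in[n]}$ contains $c$. Also, almost surely all $2^{n}$ points of $X$ are distinct, since for every nonzero $\delta\in\{-1,0,1\}^{n}$ we have $V\delta\ne 0$ with probability one (condition on all entries of $V$ outside a coordinate where $\delta\ne 0$ and use that the remaining entry has a density). Therefore, a.s., distinct cells of $\mathcal{A}$ meeting $\RR^d_{>0}$ produce distinct Pareto optima, and
\[
p(X)\ \ge\ \#\{\text{cells of }\mathcal{A}\text{ meeting }\RR^d_{>0}\}.
\]

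Next I would use symmetry to argue that the positive orthant meets a $1/2^{d}$ fraction of the cells in expectation. Because the entries of $V$ are independent and symmetric about $0$, negating, for any $\epsilon\in\{\pm1\}^{d}$, the rows of $V$ indexed by the $-1$'s of $\epsilon$ does not change the distribution of $V$; on the dual side this amounts to replacing $c$ by $\mathrm{diag}(\epsilon)\,c$, which maps $\RR^d_{>0}$ bijectively onto the $\epsilon$-orthant. Hence for each fixed $\sigma\in\{\pm1\}^{n}$, the probability that $\sigma$ is realized as $s(c)$ for some $c\in\RR^d_{>0}$ is the same as the probability that it is realized for some $c$ in the $\epsilon$-orthant, for every $\epsilon$. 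Since the open orthants are dense, any $\sigma$ realized at all is realized inside some open orthant, so a union bound over the $2^{d}$ orthants yields
\[
2^{d}\,\Pr\bigl[\sigma\text{ realized in }\RR^d_{>0}\bigr]\ \ge\ \Pr\bigl[\sigma\text{ realized}\bigr].
\]

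Finally I would sum this over $\sigma\in\{\pm1\}^{n}$. The left side becomes $2^{d}\,\E\,\#\{\text{cells of }\mathcal{A}\text{ meeting }\RR^d_{>0}\}$ and the right side becomes $\E\,\#\{\text{cells of }\mathcal{A}\}$. Almost surely the normals $V_1,\dots,V_n$ are in general position, so $\mathcal{A}$ is a generic central arrangement of $n$ hyperplanes in $\RR^d$ and, by the classical cell-count formula, has exactly $2\sum_{k=0}^{d-1}\binom{n-1}{k}$ cells. Putting the pieces together,
\[
\E_V\,p(X)\ \ge\ \E\,\#\{\text{cells of }\mathcal{A}\text{ meeting }\RR^d_{>0}\}\ \ge\ \frac{1}{2^{d}}\cdot 2\sum_{k=0}^{d-1}\binom{n-1}{k}\ =\ \frac{1}{2^{d-1}}\sum_{k=0}^{d-1}\binom{n-1}{k}.
\]
The step requiring the most care is the reduction in the first paragraph: one must check both that a maximizer of a strictly positive functional is genuinely Pareto-optimal and, more delicately, that distinct realizable sign patterns give genuinely distinct points of $X$ --- this last point is exactly where the density hypothesis enters. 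After that, the symmetry reduction and the classical arrangement count are essentially bookkeeping. (It is tempting to invoke Wendel's theorem for the probabilistic input, but that would force the columns $V_j$ to be identically distributed; relying instead on the deterministic count of cells of a generic central arrangement avoids this restriction.)
\end{proofidea}
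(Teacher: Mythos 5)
Your proposal is correct and takes essentially the same approach as the paper: it is the paper's first proof in dual form, with the cells of the central arrangement $\{c : \langle c, V_j\rangle = 0\}$ playing the role of the vertices of the zonotope $\conv X$, your direct observation that maximizers of strictly positive functionals are Pareto-optimal (together with the $2^d$ orthant symmetry/union bound) replacing the citation of Bentley et al., and the classical count $2\sum_{k=0}^{d-1}\binom{n-1}{k}$ of cells of a generic central arrangement replacing the zonotope vertex count --- the same fact that also underlies Wendel's theorem in the paper's second proof. One minor aside: the version of Wendel's theorem the paper invokes requires only independent symmetric distributions (not identical ones), so your stated reason for avoiding it is not actually a constraint, though your self-contained cell-count route is equally valid.
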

This implies the simpler bound $ \E_V \: p(X) \geq \bigl( \frac{n-1}{2(d-1)}\bigr)^{d-1}$.

We give two proofs of this result. The two proofs have a similar essence, but a somewhat different form. Both proofs relate the
problem at hand to some well-known results in geometry. This connection with geometry is new, and may be useful for future
research. The first proof lower bounds the expected number of Pareto-optima of a point set by the expected number of vertices of its convex hull (up to a constant that depends on $d$ but not on $n$) and then invokes known lower bounds on the expected number of vertices of projections of hypercubes. The second proof gives a characterization of maximality in terms of $0$--$1$ vectors and then relaxes integrality to get a relaxed dual characterization by means of convex separation, which reduces the counting of Pareto-optima to lower bounding the probability that the convex hull of $n$ random points contains the origin. This probability is known exactly by a theorem of Wendel.

Interestingly, our lower bound is basically the same as the expected number of Pareto optima when $2^n$ uniformly random points are chosen from  $[-1,1]^{d}$, which is shown to be $\Theta_d(n^{d-1})$ in several papers~\cite{Bentley, Devroye, Buchta}. This raises the possibility of a closer connection between the two models; such a connection could be useful as the model of uniformly random points is better understood.

The basic theorem above corresponds to the case when the set of feasible solutions $\calS$ is $\{0,1\}^n$. But in many interesting cases $\calS$ is a strict subset of $\{0,1\}^n$: For example, in the multiobjective spanning tree problem $n$ is the number of edges in an underlying network, and $\calS$ is the set of incidence vectors of spanning trees in the network; similarly, for the multiobjective shortest path problem $\calS$ is the set of incidence vectors of $s$--$t$ paths. We can use our basic theorem to prove lower bounds on the size of the Pareto set for such $\calS$. Our technique is pliable enough to give interesting lower bounds for many standard objective functions used in multiobjective optimization (in fact, any standard objective that we tried): Multiobjective shortest path, TSP tour, matching, arborescence, etc. We will illustrate the idea with the multiobjective spanning tree problem on the complete graph. In this problem, we have the complete undirected graph $K_n$ on $n$ vertices as the underlying graph.
Each edge $e$ has a set of profits $v^{(i)}(e) \in [-1, 1]$ for $i \in [d]$.
The set $\calS$ of feasible solutions is given by the incidence vectors of spanning trees of $K_n$. Notice that the feasible set here lives in $\{0,1\}^{n \choose 2}$ and not in $\{0,1\}^n$.

\begin{theorem} \label{thm:trees}
In the $d$ objective maximum spanning tree problem on $K_n$ there exists a choice of $4$-semirandom distributions such that the expected number of Pareto-optimal spanning trees is at least $(\frac{n-3}{2(d-1)})^{d-1}$.
\end{theorem}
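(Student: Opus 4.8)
The plan is to deduce Theorem~\ref{thm:trees} from the Basic Theorem~\ref{thm:main} by embedding a hypercube's worth of spanning trees into $K_n$ and choosing the edge-profit distributions so that (i) the profit vectors of the embedded trees form an affine image of a random matrix of the type handled by Theorem~\ref{thm:main}, and (ii) every tree that is Pareto-optimal \emph{within} the embedded family is automatically Pareto-optimal among \emph{all} spanning trees of $K_n$. Essentially all the work is in (ii); once it is in place, the set of globally Pareto-optimal spanning trees contains the family members that are Pareto-optimal within the family, so taking expectations and invoking Theorem~\ref{thm:main} finishes the proof.

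Here is the construction. Single out the vertices $1$ and $2$; call an edge \emph{active} if it is $\{1,j\}$ for some $j\in\{2,\dots,n\}$ or $\{2,j\}$ for some $j\in\{3,\dots,n\}$, and \emph{inactive} otherwise. For $S\subseteq\{3,\dots,n\}$ let $T_S$ be the spanning tree consisting of $\{1,2\}$, the edges $\{2,j\}$ for $j\in S$, and the edges $\{1,j\}$ for $j\in\{3,\dots,n\}\setminus S$; this is a spanning tree for every $S$, giving a family of $2^{n-2}$ trees indexed by $\{0,1\}^{n-2}$. Choose the distributions independently over edges and objectives: $v^{(i)}(\{1,2\})$ uniform on $[\tfrac12,1]$; $v^{(i)}(\{1,j\})$ and $v^{(i)}(\{2,j\})$ uniform on $[0,\tfrac12]$ for $j\ge 3$; and $v^{(i)}(e)$ uniform on $[-1,-\tfrac12]$ for every inactive edge $e$. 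Each of these has density $2\le 4$ and is supported in $[-1,1]$, so they are $4$-semirandom. The profit vector of $T_S$ equals $c + U r^S$, where $c\in\RR^d$ does not depend on $S$, $r^S\in\{0,1\}^{n-2}$ is the indicator vector of $S$, and $U$ is the $d\times(n-2)$ matrix with entries $u^{(i)}_j = v^{(i)}(\{2,j\})-v^{(i)}(\{1,j\})$. Each $u^{(i)}_j$, being a difference of two independent variables with the same distribution, is symmetric about $0$ and has a density, and the $u^{(i)}_j$ are mutually independent; so Theorem~\ref{thm:main}, applied with $n-2$ in place of $n$, gives $\E_U\, p(\{Ur : r\in\{0,1\}^{n-2}\}) \ge \bigl(\tfrac{n-3}{2(d-1)}\bigr)^{d-1}$. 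Since domination is invariant under the common translation by $c$, this quantity is (almost surely) exactly the number of family trees $T_S$ that are Pareto-optimal within the family.

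For step (ii) I would prove three sub-claims. \emph{(a)} Any spanning tree using an inactive edge $e$ is dominated in every coordinate by a spanning tree with one fewer inactive edge: deleting $e$ splits the tree into the component $C_1$ containing vertex $1$ and a component $C$ with $1\notin C$; for any $j\in C$ the edge $\{1,j\}$ is active and not already present, so $T-e+\{1,j\}$ is again a spanning tree, and $v^{(i)}(\{1,j\})\ge 0 > -\tfrac12 \ge v^{(i)}(e)$ for every $i$. Iterating, every spanning tree is dominated by (or equals) one using only active edges. \emph{(b)} A short degree count — in the active-edge graph every $j\ge 3$ has only $1$ and $2$ as neighbours — shows that the spanning trees using only active edges are precisely the family $\{T_S\}$ together with the trees $T'_{j_0,A}$, for $j_0\in\{3,\dots,n\}$ and $A\subseteq\{3,\dots,n\}\setminus\{j_0\}$, that use both $\{1,j_0\}$ and $\{2,j_0\}$, omit $\{1,2\}$, and attach each other $j$ to $1$ or to $2$ according to membership in $A$. \emph{(c)} The tree $T'_{j_0,A}$ differs from the family tree $T_{A\cup\{j_0\}}$ only in that the latter uses $\{1,2\}$ where the former uses $\{1,j_0\}$; since $v^{(i)}(\{1,2\})\ge\tfrac12 > v^{(i)}(\{1,j_0\})$ for every $i$, $T_{A\cup\{j_0\}}$ dominates $T'_{j_0,A}$ in every coordinate. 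Combining these with transitivity of domination: if any spanning tree dominated a family-Pareto-optimal $T_S$, then by (a) an active-only tree would dominate $T_S$, and by (b) and (c) some family tree would dominate $T_S$ — contradicting Pareto-optimality of $T_S$ within the family. Hence (ii) holds.

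The main obstacle is making all of these domination reductions hold \emph{simultaneously in all $d$ objectives} (not merely in one objective or in expectation) while keeping the distributions $4$-semirandom; this is exactly why the three groups of edges are given essentially disjoint supports $[-1,-\tfrac12]$, $[0,\tfrac12]$, $[\tfrac12,1]$, which makes the inequalities $v^{(i)}(\{1,j\})>v^{(i)}(e)$ and $v^{(i)}(\{1,2\})>v^{(i)}(\{1,j_0\})$ hold deterministically (up to a probability-zero boundary event). A routine point is that the $2^{n-2}$ points $Ur$ are almost surely distinct, so nothing is lost in passing to the set $\{Ur\}$.
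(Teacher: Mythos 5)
Your proposal is correct and follows essentially the same route as the paper's proof: the same two-hub embedding of a $2^{n-2}$-member family of spanning trees, the same three-tier choice of profit intervals (high on the hub edge, middle on the spokes, low on all remaining edges) forcing every non-family tree to be dominated in all coordinates, and the same reduction to Theorem~\ref{thm:main} with $n-2$ variables. The only difference is a minor technical variation: by placing the spoke profits in $[0,\tfrac12]$ you make the differences $v^{(i)}(\{2,j\})-v^{(i)}(\{1,j\})$ unconditionally symmetric and apply Theorem~\ref{thm:main} directly, whereas the paper uses $[-\tfrac12,\tfrac12]$ and a two-stage revelation (condition on the sums, then apply the $\{-1,1\}^{n-2}$ corollary to the conditionally symmetric differences).
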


The proof of this theorem utilizes the full power of Theorem~\ref{thm:main}, namely the ability to choose different symmetric
distributions.

In our basic theorem above, Theorem \ref{thm:main}, we required the distributions to be symmetric, and therefore that theorem does not apply to the $0$--$1$ knapsack problem where all profits and weights are non-negative. With a slight loss in the lower bound we also get a lower bound for this case.
In the multiobjective $0$--$1$ knapsack problem we have $d$ objectives $v^{(i)}$ for $i \in [d]$ called profits and an additional objective $w$ called weight. Components of $p^{(i)}$ and $w$ are all chosen from $[0,1]$. We want to maximize the profits and minimize the weight, and so the definitions of domination and Pareto-optimality are accordingly modified.

\begin{theorem} \label{thm:basicknapsack}
For the multiobjective $0$--$1$ knapsack problem where all the weight components are $1$ and profit components are chosen uniformly at random from $[0,1]$, the expected number of Pareto optima is $\Omega_d(n^{d-1.5})$.
\end{theorem}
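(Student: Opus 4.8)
The plan is to reduce the knapsack instance of the theorem to a symmetric one and then apply Theorem~\ref{thm:main}. Since each profit entry is uniform on $[0,1]$, it is symmetric about $1/2$, so let $V'$ be the $d\times n$ matrix obtained from $V$ by subtracting $\tfrac12$ from every entry; its entries are then independent, each uniform on $[-\tfrac12,\tfrac12]$, hence symmetric about the origin and with a density. For $x\in\{0,1\}^n$ one has $(V'x)_i=(Vx)_i-\tfrac12\,w(x)$ for every $i\in[d]$, where $w(x)=\sum_{j}x_j$ is the number of chosen items, i.e., the knapsack weight (all weight components being $1$).

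The crux is the claim that every $x\in\{0,1\}^n$ that is Pareto-optimal for problem~\eqref{mprob} with profit matrix $V'$ and $\calS=\{0,1\}^n$ is also Pareto-optimal for the knapsack instance. Indeed, suppose $y$ dominated $x$ in the knapsack sense; then $w(y)\le w(x)$ and $(Vy)_i\ge(Vx)_i$ for all $i$, hence $(V'y)_i-(V'x)_i\ge\tfrac12\bigl(w(x)-w(y)\bigr)$ for every $i$. If $w(y)<w(x)$, the right-hand side is at least $1/2>0$, so $V'y$ strictly exceeds $V'x$ in every coordinate, contradicting Pareto-optimality of $x$ for $V'$. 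If $w(y)=w(x)$, then $(V'y)_i\ge(V'x)_i$ for all $i$, and since the weights agree the objective witnessing the strict part of the knapsack domination must be one of the profits, so $(V'y)_{i_0}>(V'x)_{i_0}$ for some $i_0$; thus $V'y$ dominates $V'x$, again a contradiction. (The possibility $w(y)>w(x)$ cannot occur, as $y$ would then be strictly worse in weight.) So no $y$ knapsack-dominates $x$, proving the claim.

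It follows that, for every realization of $V$, the number of knapsack Pareto-optima is at least the number of Pareto-optima of the point set $\{V'x:x\in\{0,1\}^n\}$ (distinct Pareto-optimal values $V'x$ remain distinct once the weight coordinate is adjoined). Taking expectations and applying Theorem~\ref{thm:main} to the symmetric matrix $V'$ then shows that the expected number of knapsack Pareto-optima is at least $\frac{1}{2^{d-1}}\sum_{k=0}^{d-1}\binom{n-1}{k}=\Omega_d(n^{d-1})$, which is in particular $\Omega_d(n^{d-1.5})$. The one nontrivial ingredient is the crux above, and within it the case of a \emph{lighter} competitor $y$: that is where the instance structure is really used --- weights all $1$ force a lighter solution to omit at least one item, and profits centered at the strictly positive value $\tfrac12$ make omitting an item cost at least $\tfrac12$ in each centered profit, so any lighter dominator would have to strictly beat $x$ in every coordinate of $V'$, which a Pareto-optimum of $V'$ excludes. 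I expect this to be the main (essentially the only) obstacle; I would also note that the reduction appears to give the stronger bound $\Omega_d(n^{d-1})$, the loss in the stated exponent presumably stemming from a more uniform treatment that also handles instances with non-trivial weights.
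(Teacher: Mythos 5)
Your proof is correct, and at the decisive step it takes a genuinely different (and in fact sharper) route than the paper. Both arguments use the same centering map $p\mapsto p-\tfrac12$ and ultimately rest on Theorem~\ref{thm:main}, but the paper does not establish your containment: asserting that Pareto optima need not transfer between the centered problem and the knapsack instance, it restricts attention to the slice $\calS$ of solutions with exactly $\floor{n/2}$ ones, applies Corollary~\ref{cor:S} (paying the factor $\binom{n}{\floor{n/2}}/2^n=\Theta(1/\sqrt{n})$, which is precisely where the exponent $d-1.5$ comes from), and then argues that a Pareto optimum within that slice is a knapsack Pareto optimum by enlarging any lighter competitor to the slice and invoking non-negativity of the profits. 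Your observation is that no slice restriction is needed: if $y$ knapsack-dominates $x$ with $w(y)<w(x)$, then every coordinate of $V'y-V'x$ is at least $\tfrac12(w(x)-w(y))>0$ (the uncentered profit differences being nonnegative), so $y$ strictly dominates $x$ in the centered problem, and an equal-weight dominator transfers verbatim; hence every Pareto optimum of the centered problem over all of $\{0,1\}^n$ is a knapsack Pareto optimum. This is valid as stated (your distinctness parenthetical is the right thing to check and holds almost surely, since $(Vx,w(x))$ determines $V'x$), and it yields the stronger bound $\Omega_d(n^{d-1})$ for this instance, of which the stated $\Omega_d(n^{d-1.5})$ is an immediate consequence. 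One small correction to your closing speculation: the $\sqrt{n}$ loss in the paper does not come from accommodating nontrivial weights but from its restriction to the middle Hamming slice via Corollary~\ref{cor:S}.
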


Theorems \ref{thm:main} or \ref{thm:basicknapsack} (depending on whether one wants a bound for non-negative or unrestricted weights and profits) can be used in a simple way as the base case of the argument with $d+1$ objectives in \cite[Section 3]{BRoglin11} to give the following improved lower bound on the expected number of Pareto optima when the profits are $\phi$-semirandom (actually, uniform in carefully chosen intervals of length at least $1/\phi$):
\begin{theorem} \label{thm:phi}
For any fixed $d \geq 2$ (so that the constants in asymptotic notation may depend on $d$) and for $n \in \NN$ and $\phi > 1$ there exist
\begin{enumerate}
\item weights $w_1, \dotsc, w_n \geq 0$,
\item intervals $[a_{ij}, b_{ij}] \subseteq [0,1]$, $i \in [d], j \in [n]$ of length at least $1/\phi$ and with $a_{ij} \geq 0$, and
\item a set $\calS \subseteq \{0,1\}^n$
\end{enumerate}
such that if profits $v^{(i)}_{j}$ are chosen independently and uniformly at random in $[a_{ij}, b_{ij}]$, then the expected number of Pareto-optimal solutions of the $(d+1)$-dimensional knapsack problem with solution set $\calS$ is at least
\[
\min \{ \Omega_d( n^{d-1.5} \phi^{(d-\log d) (1-\Theta(1/\phi))}), 2^{\Theta(n)} \}.
\]
\end{theorem}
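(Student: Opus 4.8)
\noindent\emph{Proof plan.} The plan is to invoke the construction of \cite[Section~3]{BRoglin11} essentially as a black box, but with our stronger base case plugged in. Their argument produces the desired $(d+1)$-objective knapsack instance recursively: it starts from a small ``base'' knapsack instance on a few profit objectives (plus the weight), and then repeatedly attaches a gadget that introduces one additional $\phi$-semirandom profit (uniform in a subinterval of $[0,1]$ of length at least $1/\phi$) while disjointly adding a block of new items; each attachment multiplies the expected number of Pareto-optimal solutions by roughly $n\phi$, up to a $(1-\Theta(1/\phi))$ factor in the exponent, and the running count is always truncated at the trivial bound $\abs{\calS}\le 2^n$. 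Since their base instance contributes only a constant factor and the recursion bottoms out at about $\log_2 d$ objectives, their analysis yields the exponent $d-\log_2 d$; replacing the base by an instance with polynomially many Pareto optima is what pushes this up.

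First I would extract from \cite[Section~3]{BRoglin11} the exact interface their composition demands of the base instance: a fixed number $d_0$ of profit objectives together with a weight objective, nonnegative weights, profits uniform on subintervals of $[0,1]$ of length $\ge 1/\phi$, and whatever constraint (if any) they impose on the weight vector. Then I would verify that the instance furnished by Theorem~\ref{thm:basicknapsack} --- all weights $1$, profits uniform on $[0,1]$, with $\Omega_{d_0}(n^{d_0-1.5})$ Pareto optima --- satisfies this interface: rescaling each profit into an interval of length $1/\phi$ is a coordinatewise monotone transformation that leaves the Pareto set unchanged, and the all-ones weight vector either already fits their scheme or is accommodated by a trivial modification (and if signs cause trouble, Theorem~\ref{thm:main} provides a symmetric-distribution variant to fall back on).

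Next I would substitute this base case, taking $d_0\approx\log_2 d$ of the $d$ profit objectives to come from Theorem~\ref{thm:basicknapsack} and the remaining $d-d_0$ from the attachment gadgets, and re-run the arithmetic of the recursion. The base now contributes $n^{d_0-1.5}$ rather than $\Theta(1)$, the $d-d_0$ attachments contribute the factors of $\phi$ (with the loss) as before, and after simplification one should land on $\min\{\Omega_d(n^{d-1.5}\phi^{(d-\log d)(1-\Theta(1/\phi))}),\,2^{\Theta(n)}\}$: the $\phi$-exponent is inherited unchanged from \cite{BRoglin11}, and only the $n$-exponent improves, to $d-1.5$.

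The main obstacle is bookkeeping and compatibility rather than any new idea. One must check that the composition lemma of \cite{BRoglin11} is genuinely agnostic to how the base instance was obtained (and not secretly tuned to their particular gadget), that the distributional and weight hypotheses of Theorem~\ref{thm:basicknapsack} survive the embedding into their instance, and --- the most delicate point --- that the powers of $n$ recombine to the clean exponent $d-1.5$ even though each attachment step a priori also loses a $(1-\Theta(1/\phi))$ factor on the $n$ it contributes; resolving this may require observing that the $n$-factor of a gadget step is in fact incurred without the $\Theta(1/\phi)$ loss, or choosing the split between base and gadget objectives so the net $n$-exponent still simplifies as claimed, and confirming the $2^{\Theta(n)}$ truncation kicks in exactly where their analysis places it.
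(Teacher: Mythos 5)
Your top-level strategy---plug the instance of Theorem~\ref{thm:basicknapsack} in as the base case of the Brunsch--R{\"o}glin construction and re-run their accounting---is exactly the paper's approach. However, the model of their composition that your plan is built on is not the right one, and the bookkeeping you propose to ``re-run'' is premised on it. Their lower-bound construction is not a recursion that adds one profit objective per gadget, multiplies by roughly $n\phi$ per attachment, and bottoms out at about $\log_2 d$ objectives. The number of objectives is $d+1$ throughout, and the base case must already carry all $d$ profits: in the paper's proof the base is the instance of Theorem~\ref{thm:basicknapsack} on $n_p=\floor{n/2}$ unit-weight items with all $d$ profits uniform in $[0,1/\phi]$, contributing $\Omega_d(n_p^{d-1.5})$. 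The recursion is over $n_q$ ``cloning'' rounds at fixed dimension; each round adds $d$ new items (plus auxiliary items needed to split profits exceeding $1$ into $[0,1]$-profits, glued together via the choice of $\calS$) and multiplies the expected number of Pareto optima by $2^d/d$. Thus the exponent $d-\log_2 d$ is $\log_2(2^d/d)$, a per-round gain, not a count of objectives, and the factor $(1-\Theta(1/\phi))$ does not come from per-gadget losses on $n$ at all: it comes from the object budget, since the auxiliary split items grow like $\bigl(\tfrac{2\phi}{\phi-d}\bigr)^{n_q}$, forcing
\[
n_p + d n_q + \frac{2d^2}{\phi-d}\Bigl(\frac{2\phi}{\phi-d}\Bigr)^{n_q} \le n,
\qquad
n_q \approx \frac{\log\phi}{\log\frac{2\phi}{\phi-d}} = (\log_2\phi)\bigl(1-\Theta(1/\phi)\bigr),
\]
which yields $\Omega\bigl(n_p^{d-1.5}(2^d/d)^{n_q}\bigr)=\Omega_d\bigl(n^{d-1.5}\phi^{(d-\log d)(1-\Theta(1/\phi))}\bigr)$; the $2^{\Theta(n)}$ branch arises by capping $\phi$ at roughly $2^{n/2d}$ when the budget constraint would otherwise be violated, not by a generic truncation at $\card{\calS}\le 2^n$.

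Concretely, your proposed split of the $d$ profits into $d_0\approx\log_2 d$ ``base'' objectives and $d-d_0$ ``objective-adding'' gadgets has no counterpart in their construction, and the ``interface'' you plan to verify against (one new $\phi$-semirandom profit and a factor $n\phi$ per attachment) is not what their composition lemma provides. The arithmetic that happens to land on $n^{d-1.5}\phi^{d-\log d}$ under your model is a coincidence of that model, and the ``delicate point'' you flag (whether the gadget's $n$-factor also incurs a $\Theta(1/\phi)$ loss) dissolves once the construction is read correctly, because the gadgets contribute no $n$-factors; the real tension is how to divide the $n$ items between the base ($n_p$) and the geometrically growing cost of cloning rounds ($n_q$). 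Filling in that division---the choices of $n_p$ and $n_q$ above and the large-$\phi$ case---is the actual content of the proof, and it is missing from the proposal as written.
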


For general multiobjective optimization (basically without the restriction of entries being non-negative) the exponent of $n$ becomes exactly $d$.

The technique of \cite{BRoglin11} requires $\calS$ to be chosen adversarially, and so this is the case for Theorem~\ref{thm:phi} above as well.  To our knowledge, no non-trivial lower bounds were known before our work for natural choices of $\calS$. This is addressed by our Theorems~\ref{thm:main} ($\calS = \{0,1\}^n$) and \ref{thm:trees} ($\calS$ is the set of spanning trees of the complete graph) above, though these Theorems are for a small constant value of $\phi$, and therefore do not clarify what the dependence of $\phi$ should be.

Very recently, Brunsch and R{\"o}glin improved the induction step of their lower bound \cite{BRoglin112}. Combining their improved
result with our result yields the lower bound of $\min\{\Omega_d(n^{d-1.5} \phi^d), 2^{\Theta(n)}\}$.

\vspace{-0.1in}
\section{The basic theorem}
\vspace{-0.1in}
In this section we prove Theorem~\ref{thm:main}.
We will include two proofs that, while in essence the same, emphasize the geometric and algebraic views, respectively.  Also the second proof is more self-contained. It is perhaps worth mentioning that we first discovered the second proof, and in the course of writing the present paper we found the ideas and known results that could be combined to get a more geometric proof.
\vspace{-0.05in}
\subsection{First proof}
\vspace{-0.05in}
\begin{proof}
The convex hull of $X$ is a random polytope, a zonotope actually, that is, a linear image of a hypercube or, equivalently, a Minkowski sum of segments.
By Theorem \ref{thm:bentley} \cite{Bentley}, every vertex is maximal under our partial order $\preceq$ for at least one of the $2^d$ reflections involving coordinate hyperplanes. That is
\begin{equation}\label{equ:bentleyetal}
\card{\text{vertices of $\conv{X}$}} \leq \sum_{\eps \in \{-1, 1\}^d} p(\text{$X$ with coordinates of points flipped by signs in $\eps$})
\end{equation}
Our symmetry assumption followed by \eqref{equ:bentleyetal} implies\footnote{This idea is from \cite{Bentley}. It is used there in the opposite direction, that is, to get upper bounds on the expected number of vertices from upper bounds on the expected number of maximal points.}
\begin{align*}
\E_V(p(X)) &= \frac{1}{2^d} \cdot \sum_{\eps \in \{-1, 1\}^d} \E \; p(\text{$X$ with coordinates of points flipped by signs in $\eps$}) \\
&\geq \frac{1}{2^d} \cdot \E \: \card{\text{vertices of $\conv{X}$}}
\end{align*}
It is known \cite[Theorem 1.8]{donoho} that for $V$ with columns in general position (that is, any $d$ columns are linearly independent, which happens almost surely in our case) the number of vertices is equal to the maximal number of vertices of a $d$-dimensional zonotope formed as the sum of $n$ segments \cite[31.1.1]{goodman2004handbook}. That is, almost surely:
\[
\card{\text{vertices of $\conv{X}$}} = 2 \sum_{k=0}^{d-1} \binom{n-1}{k}.
\]
The claimed bound follows.
\end{proof}
We used the following result:
\begin{theorem}[\cite{Bentley}, {\cite[Theorem 4.7]{MR805539}}]\label{thm:bentley}
Let $P$ be a finite subset of $\RR^d$. A vertex of the convex hull of $P$ is maximal under $\preceq$ in at least one of the $2^d$ assignments of $d$ signs $+$ and $-$ to each of the coordinates of the points of $P$.
\end{theorem}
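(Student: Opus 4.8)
\emph{Proof proposal.} The plan is to read the required sign pattern off a supporting hyperplane at the vertex. Fix a vertex $v$ of $\conv P$. A vertex of a polytope is in particular an extreme point, so $v$ is not a convex combination of the remaining points of $P$; that is, $v \notin \conv(P \setminus \{v\})$. Since $P$ is finite, $\conv(P \setminus \{v\})$ is a compact convex set not containing $v$, so by the separating hyperplane theorem there is $c \in \RR^d$ with $\dotp{c}{v} > \dotp{c}{x}$ for every $x \in \conv(P \setminus \{v\})$; in particular $\dotp{c}{p} < \dotp{c}{v}$ for all $p \in P$ with $p \neq v$.

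Next I would arrange that $c$ has no zero coordinate. There are only finitely many strict inequalities $\dotp{c}{p} < \dotp{c}{v}$ with $p \in P \setminus \{v\}$, so replacing $c$ by a sufficiently small generic perturbation keeps all of them strict while making every coordinate nonzero. Now define $\eps \in \{-1,+1\}^d$ by $\eps_i := \operatorname{sign}(c_i)$, let $P^\eps := \{(\eps_1 p_1, \dotsc, \eps_d p_d) : p \in P\}$ be the point set obtained by flipping coordinates according to $\eps$, and let $v^\eps$ denote the image of $v$.

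It remains to verify that $v^\eps$ is maximal under $\preceq$ in $P^\eps$. Suppose not; then there is $q \in P$ with $q \neq v$ and $\eps_i q_i \geq \eps_i v_i$ for every $i \in [d]$ (for distinct points this already forces a strict inequality in some coordinate, so it is exactly the failure of maximality). Multiplying the $i$-th inequality by $\abs{c_i} > 0$ and using $\abs{c_i}\,\eps_i = c_i$ gives $c_i q_i \geq c_i v_i$ for each $i$; summing over $i$ yields $\dotp{c}{q} \geq \dotp{c}{v}$, contradicting $\dotp{c}{q} < \dotp{c}{v}$. Hence $v^\eps$ is maximal, which is the assertion.

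I expect the only point needing a little care to be the opening step — that a vertex of $\conv P$ is an extreme point and hence can be strictly separated from the other points of $P$, a standard fact about polytopes. The remainder is the routine perturbation making $c$ coordinate-wise nonzero, followed by the short sign-bookkeeping computation above, neither of which should present any real difficulty.
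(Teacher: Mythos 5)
Your proof is correct and complete: the strict separation of a vertex from the convex hull of the remaining points, the generic perturbation making the normal coordinate-wise nonzero, and the sign bookkeeping all go through (and, as you note, only the weak dominance inequalities are needed to contradict the strict separation). The paper itself does not prove this statement — it cites it to Bentley et al.\ and to \cite[Theorem 4.7]{MR805539} — so there is no in-paper argument to compare against; your argument is the standard one underlying those references (read the sign pattern off a strictly separating normal at the vertex), and it can stand as a self-contained proof.
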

\vspace{-0.05in}
\subsection{Second proof}
\vspace{-0.05in}
Some more definitions before getting into the proof:
Set $\RR_+ = \{x \in \RR : x \geq 0\}$ and $\RR_- = \{x \in \RR : x \leq 0\}$.
For $\epsilon \in \{-1, 1\}^\dimd$,
the orthant associated with $\epsilon$ is $\{(\epsilon_1 x_1, \ldots, \epsilon_\dimd x_\dimd)\::\:
(x_1, \ldots, x_\dimd) \in \RR_+^\dimd\}$. In particular, if $\epsilon$ is the all $1$'s vector then we call its associated orthant the positive orthant, and if $\epsilon$ is the all $-1$'s vector then we call its orthant the negative orthant. For a finite set of points $P = \{p_1, \dotsc, p_k\} \subseteq \RR^d$, the conic hull is denoted $\cone(P) = \{ \sum_{i=1}^k \alpha_i p_i \suchthat \alpha_i \geq 0\}$ (note that the conic hull is always convex).

\begin{proof}
By linearity of expectation
$\E \: p(X) = \sum_r \Pr[Vr \text{ maximal}].$
Notice that $\Pr[Vr \text{ maximal}]$ does not depend on $r$, so we can
write
$\E \: p(X) = 2^n \Pr[V1 \text{ maximal}]$.

For the rest of the proof we will focus on finding a lower bound on this last
probability.
To understand this probability we first rewrite the event $[V1 \text{ maximal}]$ in
terms of a different event via easy intermediate steps:
\begin{align*}
[V1 \text{ maximal}] = [Vr \nsucc V1, \;\; \forall r \in \{0,1\}^n]
= [0 \nsucc V(1-r), \;\; \forall r \in \{0,1\}^n]
= [0 \nsucc V r, \;\; \forall r \in \{0,1\}^n].
\end{align*}

Now we have $
\Pr[0 \nsucc V r, \;\; \forall r \in \{0,1\}^n] \geq
\Pr[0 \nsucc Vr, \;\; \forall r \in [0,1]^n].$

Event $[0 \nsucc Vr, \;\; \forall r \in [0,1]^n]$ is the same as the event
$[\cone(v_1, \ldots, v_n) \cap \RR_-^{\dimd} = \{0\}]$.  That is to say, the cone
generated by the non-negative linear combinations of $v_1, \ldots, v_n$ does not
have a point distinct from the origin that lies in the negative orthant.

By the separability property of convex sets (Hahn-Banach theorem) we have
that there exists a hyperplane $H = \{x \in \RR^d : \dotp{u}{x}=0\}$ separating
$\cone(v_1, \ldots, v_n)$ and $\RR_-^{d}$. That is, there exists $u \in \RR_+^d \setminus \{0\}$ such that $\cone(v_1, \ldots, v_n)\cdot u \geq 0$ and this implies
\[
\Pr[\cone(v_1, \ldots, v_n) \cap \RR_-^{\dimd} = \{0\}] = \Pr[\exists u \in \RR_+^\dimd \setminus \{0\} : \cone(v_1, \ldots, v_n)\cdot u \geq 0].
\]
Now
\begin{align*}
\Pr[\cone(v_1, &\ldots, v_n) \text{ in a halfspace}]\\
&\leq \sum_{\epsilon \in \{-1, 1\}^\dimd} \Pr[\cone(v_1, \ldots, v_n) \text{ in a halfspace with inner normal in orthant $\epsilon$}] \\
&= 2^\dimd \Pr[\exists u \in \RR_+^\dimd\setminus \{0\} : \cone(v_1, \ldots, v_n)\cdot u \geq 0].
\end{align*}
Clearly, we have
\begin{align*}
[\cone(v_1, \ldots, v_n) \text{ in a halfspace}] = [v_1, \ldots, v_n \text{ in a halfspace}].
\end{align*}
Theorem \ref{thm:wendel} and the fact that the distribution of $v_i$ is centrally symmetric and assigns measure zero to every hyperplane through 0 imply
\begin{align*}
\Pr_V[v_1, \ldots, v_n \text{ in a halfspace}]
= \Pr_V [0 \notin \conv\{v_1, \dotsc, v_n\}]
= \frac{1}{2^{n-1}} \sum_{k=0}^{\dimdm} \binom{n-1}{k}.
\end{align*}
We conclude:
\begin{align*}
\E \: p(X) &\geq  \frac{1}{2^{\dimdm}} \sum_{k=0}^{\dimdm} \binom{n-1}{k}.
\end{align*}
\end{proof}

We used the following result by Wendel:
\begin{theorem}[\cite{MR0146858}, {\cite[Theorem 8.2.1]{schneider}}]\label{thm:wendel}
If $X_1, \dotsc, X_n$ are independent random points in $\RR^d$ whose distributions are symmetric with respect to 0 and such that with probability 1 all subsets of size $d$ are linearly independent, then
\[
\Pr[0 \notin \conv\{X_1, \dotsc, X_n \}] = \frac{1}{2^{n-1}} \sum_{k=0}^{d-1} \binom{n-1}{k}.
\]
\end{theorem}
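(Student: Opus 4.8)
The plan is to use the classical symmetrization-and-region-counting argument. Since each $X_i$ is symmetric about $0$, introduce i.i.d.\ signs $\eta_1, \dotsc, \eta_n$, uniform in $\{-1,+1\}$ and independent of the $X_i$, and observe that $(\eta_1 X_1, \dotsc, \eta_n X_n)$ has the same joint law as $(X_1, \dotsc, X_n)$, so it suffices to compute $\Pr[0 \notin \conv\{\eta_1 X_1, \dotsc, \eta_n X_n\}]$. First I would condition on $(X_1, \dotsc, X_n) = (y_1, \dotsc, y_n)$, which almost surely lies in general position (every $d$-subset linearly independent) by hypothesis. By strict separation of the compact convex set $\conv\{\eta_i y_i\}$ from the origin, the event $0 \notin \conv\{\eta_1 y_1, \dotsc, \eta_n y_n\}$ is exactly the event that there is $u \in \RR^d$ with $\dotp{u}{\eta_i y_i} > 0$ for all $i$, i.e.\ with $\operatorname{sign}(\dotp{u}{y_i}) = \eta_i$ for all $i$.

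Next I would pass to the central hyperplane arrangement $\mathcal{A}$ in the dual copy of $\RR^d$ formed by the $n$ hyperplanes $H_i = \{u : \dotp{u}{y_i} = 0\}$. A sign pattern $\eta \in \{-1,+1\}^n$ arises as $(\operatorname{sign}(\dotp{u}{y_1}), \dotsc, \operatorname{sign}(\dotp{u}{y_n}))$ for some $u$ precisely when the open region of $\mathcal{A}$ with that sign pattern is nonempty, and these regions are exactly the full-dimensional cells of $\mathcal{A}$. Hence, conditionally on $y$,
\[
\Pr\bigl[\,0 \notin \conv\{\eta_1 y_1, \dotsc, \eta_n y_n\}\,\bigr] = \frac{\#\{\text{full-dimensional cells of }\mathcal{A}\}}{2^n}.
\]
Because $y_1, \dotsc, y_n$ are in general position, the $H_i$ all pass through $0$ and are themselves in general position there, so the cell count is the standard quantity $C(n,d)$, which does not depend on $y$.

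The combinatorial heart is the identity $C(n,d) = 2\sum_{k=0}^{d-1}\binom{n-1}{k}$ for the number of cells cut from $\RR^d$ by $n$ central hyperplanes in general position. I would prove it by induction on $n$: the new hyperplane $H_n$ meets $H_1, \dotsc, H_{n-1}$ in $n-1$ central hyperplanes in general position inside the $(d-1)$-dimensional space $H_n$, dividing $H_n$ into $C(n-1,d-1)$ pieces, and each such piece splits exactly one old cell in two; thus $C(n,d) = C(n-1,d) + C(n-1,d-1)$, and with the base cases $C(n,1) = 2$, $C(1,d) = 2$ the closed form follows from Pascal's rule. Substituting into the displayed conditional probability gives $\frac{1}{2^{n-1}}\sum_{k=0}^{d-1}\binom{n-1}{k}$, which is free of $y$; taking the expectation over $y$ finishes the proof.

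I expect the main obstacle to be the bookkeeping rather than any single deep step: justifying the symmetrization for arbitrary (possibly atomic) symmetric laws, checking that the general-position hypothesis is precisely what makes the separation strict and the cell count exact rather than merely an upper bound, and confirming that the null event where some $\dotp{u}{y_i}$ vanishes does not disturb the count. Each is routine, but a careless treatment of them is exactly where the argument would slip.
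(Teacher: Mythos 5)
Your proposal is correct. The paper does not prove this statement at all — it is quoted as a known result of Wendel (with Schneider's book as a textbook reference) — and your argument, symmetrization by independent uniform signs followed by counting the full-dimensional cells of the generic central arrangement $\{u : \dotp{u}{y_i} = 0\}$ via the Schl\"afli recursion $C(n,d)=C(n-1,d)+C(n-1,d-1)$, is precisely the classical proof found in those references; the bookkeeping points you flag (strict separation, inheritance of general position on the new hyperplane, and the bijection between nonempty open cells and realizable sign patterns) all go through as you indicate.
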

The linear independence condition holds in particular under the simpler assumption that no hyperplane through the origin is assigned positive probability by any of the $n$ distributions. For example, it holds when the points are i.i.d. at random from the unit sphere.

The following easy corollary of Theorem \ref{thm:main} will be useful later.
\begin{cor} \label{cor:-11}
Theorem~\ref{thm:main} holds also when the feasible set $\calS = \{-1,1\}^n$.
\end{cor}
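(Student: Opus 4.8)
The plan is to reduce directly to Theorem~\ref{thm:main} via an order-preserving affine change of variables, since $\{-1,1\}^n$ and $\{0,1\}^n$ differ only by scaling and translation. Write $\mathbf 1 = (1,\dots,1) \in \RR^n$. First I would note that $s \mapsto 2s - \mathbf 1$ is a bijection from $\{0,1\}^n$ onto $\{-1,1\}^n$, so for any fixed matrix $V$ the set $X = \{Vr : r \in \{-1,1\}^n\}$ equals $\{2Vs - V\mathbf 1 : s \in \{0,1\}^n\} = T(X')$, where $X' = \{Vs : s \in \{0,1\}^n\}$ is exactly the random set appearing in Theorem~\ref{thm:main} and $T \colon \RR^d \to \RR^d$ is the affine map $T(x) = 2x - V\mathbf 1$.

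Next I would observe that $T$ preserves the partial order: for $a,b \in \RR^d$ one has $a \preceq b \iff 2a \preceq 2b \iff 2a - V\mathbf 1 \preceq 2b - V\mathbf 1$, and likewise $b \succ a \iff T(b) \succ T(a)$. Hence a point $x \in X'$ is maximal under $\preceq$ in $X'$ if and only if $T(x)$ is maximal in $T(X') = X$; since $T$ is a bijection this gives $p(X) = p(X')$ for every realization of $V$ (with probability $1$ both sets have size $2^n$, and $T$ carries the maximal elements of one onto those of the other). Taking expectations and applying Theorem~\ref{thm:main} to $X'$,
\[
\E_V\, p(X) = \E_V\, p(X') \geq \frac{1}{2^{d-1}} \sum_{k=0}^{d-1} \binom{n-1}{k},
\]
which is the desired bound.

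There is essentially no obstacle beyond this observation; the only thing to check is the elementary fact that affine maps $x \mapsto \lambda x + c$ with $\lambda > 0$ respect $\preceq$ and $\succ$. Alternatively, one could repeat the second proof of Theorem~\ref{thm:main} almost verbatim, taking $\mathbf 1 \in \{-1,1\}^n$ as the base point: then $[V\mathbf 1 \text{ maximal}] = [\,0 \nsucc V(\mathbf 1 - r),\ \forall r \in \{-1,1\}^n\,]$, and since $\{\mathbf 1 - r : r \in \{-1,1\}^n\} = 2\{0,1\}^n$ this event coincides (after dividing by the positive scalar $2$) with the event $[\,0 \nsucc Vr',\ \forall r' \in \{0,1\}^n\,]$ analyzed there, while $\Pr[Vr \text{ maximal}]$ is independent of $r \in \{-1,1\}^n$ by the central symmetry of the columns of $V$ (negating a column of $V$ leaves its distribution unchanged and merely relabels $X$).
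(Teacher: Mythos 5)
Your argument is correct, and it is essentially the intended one: the paper states this as an ``easy corollary'' without proof, and the order-preserving affine bijection $r \mapsto 2r - \mathbf{1}$ between $\{0,1\}^n$ and $\{-1,1\}^n$ (equivalently, rerunning the second proof with the base point $\mathbf 1 \in \{-1,1\}^n$, using central symmetry to handle arbitrary $r$) is exactly the observation being left to the reader. Nothing further is needed.
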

\begin{cor} \label{cor:S}
Under the assumptions of Theorem \ref{thm:main} and when the set of feasible solutions is $\calS \subseteq \{0, 1\}^n$ we have
\begin{align}
\E_V \: p(X) \geq \frac{\card{\calS}}{2^{n+d-1}} \sum_{k=0}^{d-1} \binom{n-1}{k}.
\end{align}
\end{cor}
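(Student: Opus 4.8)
The plan is to piggyback on the second proof of Theorem~\ref{thm:main} and to exploit the fact that shrinking the set of candidate solutions can only make a given point \emph{more} likely to be maximal. Write $X = \{Vr : r \in \calS\}$. First I would apply linearity of expectation to get $\E_V\,p(X) = \sum_{r \in \calS}\Pr[Vr \text{ is maximal in } X]$, after observing that almost surely all $2^n$ points $\{Vr : r \in \{0,1\}^n\}$ are pairwise distinct (each entry of $V$ has a density, so $V(r-r') \ne 0$ a.s.\ for $r \ne r'$), so there is no ambiguity about what $p(X)$ counts and every event of the form ``$Vr$ is maximal in $Y$'' is well defined.

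The key observation is a monotonicity statement: for $r \in \calS$, if $Vr$ is maximal in the full set $\{Vr' : r' \in \{0,1\}^n\}$, then it is a fortiori maximal in the subset $X$, because there are fewer potential dominators. Hence $\Pr[Vr \text{ maximal in } X] \geq \Pr[Vr \text{ maximal in } \{Vr' : r' \in \{0,1\}^n\}]$. Now I would invoke the observation used in the second proof of Theorem~\ref{thm:main} that $\Pr[Vr \text{ maximal}]$ (over the full cube) does not depend on $r$; consequently this probability equals $2^{-n}\,\E_V\,p(\{Vr' : r' \in \{0,1\}^n\})$, which by Theorem~\ref{thm:main} is at least $\frac{1}{2^{n+d-1}}\sum_{k=0}^{d-1}\binom{n-1}{k}$.

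Summing this lower bound over the $\card{\calS}$ choices of $r \in \calS$ yields $\E_V\,p(X) \geq \frac{\card{\calS}}{2^{n+d-1}}\sum_{k=0}^{d-1}\binom{n-1}{k}$, as claimed. There is no genuinely hard step here; the only points requiring a little care are the measure-zero events that two vertices $Vr, Vr'$ coincide or that some $d$ columns of $V$ are linearly dependent, both of which are excluded exactly as in the proof of Theorem~\ref{thm:main}, guaranteeing that the pointwise monotonicity comparison between ``maximal in $X$'' and ``maximal in the full cube'' is valid and that the $r$-independence of the full-cube probability applies verbatim.
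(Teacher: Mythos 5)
Your proof is correct and takes essentially the same route as the paper's: both rest on the monotonicity of maximality when the feasible set shrinks, the $r$-independence (by symmetry) of the probability that $Vr$ is maximal over the full cube, Theorem~\ref{thm:main}, and linearity of expectation. The measure-zero caveats you add are harmless extra care beyond what the paper spells out.
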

\begin{proof}
For any given $r \in \calS$, the probability that it is Pareto-optimal in the current instance with solution set restricted to $\calS$ is at least the probability that it is Pareto-optimal in the instance with solution set $\{0,1\}^n$. By the symmetry of the distribution this probability is independent of $r$ and by Theorem \ref{thm:main} it is at least
\begin{align*}
\frac{1}{2^{n+d-1}} \sum_{k=0}^{d-1} \binom{n-1}{k}.
\end{align*}
Linearity of expectation completes the proof.
\end{proof}
\section{Lower bound for multiobjective maximum spanning trees}
In this section we show that our basic result can be used to derive similar lower bounds for $\calS$ other than those encountered earlier in this paper. We illustrate this for the case of multiobjective maximum spanning tree problem on the complete graph; for this problem, $\calS$ is the set of incidence vectors of all spanning trees on $n$ vertices. The idea of the proof is simple: ``Embed'' the instance of the basic problem into the instance of the problem at hand. The proof requires the full power of Theorem~\ref{thm:main}.  
It is worth noting that the direct use of Cor.~\ref{cor:S} does not provide any useful bound for the case of spanning trees. The proof below is easily modified so that all profits are chosen from intervals of non-negative numbers.

We now prove Theorem~\ref{thm:trees}.

\begin{proof}
The idea of the proof is to embed an instance of the case $\calS = \{0,1\}^{n-2}$ of the basic theorem into the tree instance. We now describe our tree instance. We identify a subgraph $G$ of $K_n$ (the complete graph on $n$ vertices): The vertex set of $G$ is the same as the vertex set of $K_n$, which for convenience we denote by $\{s, t, u_1, u_2, \ldots, u_{n-2}\}$. The edge set of $G$ consists of the edge $(s,t)$, and edges $(s,u_j), (t,u_j)$. Thus, $G$ consists of $2n-3$ edges. Now we choose the distribution of the profits for each edge of $K_n$. For edges outside $G$, the distribution for all profits is identical and it is
simply the uniform distribution on $[-1,-1/2]$. For edge $(s,t)$, the distribution is uniform over $[1/2,1]$. And for all other edges it is uniform over $[-1/2,1/2]$.
Let $\mathcal{T}$ denote the set of spanning trees which include edge $(s,t)$, and for every other vertex $u_j$, exactly one of $(s, u_j)$ and $(t, u_j)$.
Clearly $|\mathcal{T}| = 2^{n-2}$.
The result of the above choices of distributions is that all the Pareto-optimal spanning trees come from $\mathcal{T}$:
\begin{claim} \label{claim:sts}
For any choices of profits from the intervals as specified above, if a tree $T$ is Pareto-optimal then $T \in \mathcal{T}$ .
\end{claim}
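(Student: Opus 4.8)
The plan is to show that a Pareto-optimal spanning tree $T$ must (i) use the edge $(s,t)$ and (ii) use no edge lying outside $G$ (equivalently, no edge joining two of the $u_k$'s); once both hold, a short counting argument places $T$ in $\mathcal{T}$. I will exploit the three tiers of profit values: every profit of an edge outside $G$ lies in $[-1,-1/2]$ (call such edges \emph{low}), every profit of an edge $(s,u_j)$ or $(t,u_j)$ lies in $[-1/2,1/2]$ (\emph{medium}), and every profit of $(s,t)$ lies in $[1/2,1]$ (\emph{high}). These three intervals are pairwise essentially disjoint, sharing only the points $\pm 1/2$; since the distributions are continuous, with probability $1$ every low profit is strictly below $-1/2$, every medium profit lies strictly inside $(-1/2,1/2)$, and every high profit is strictly above $1/2$. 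I use only these strict inequalities, and that is enough for the expectation bound of Theorem~\ref{thm:trees}.

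For (ii), suppose $T$ contains a low edge $e=(u_i,u_j)$. Deleting $e$ breaks $T$ into two subtrees $A\ni u_i$ and $B\ni u_j$, and $s$ lies in exactly one of them. Set $f=(s,u_j)$ if $s\in A$ and $f=(s,u_i)$ if $s\in B$; then $f$ is a medium edge whose endpoints lie in different components of $T-e$, so $f\notin T$ and $T'=(T-e)+f$ is again a spanning tree. In every coordinate $\ell$ we have $v^{(\ell)}(T')-v^{(\ell)}(T)=v^{(\ell)}(f)-v^{(\ell)}(e)>(-1/2)-(-1/2)=0$, so $T'$ dominates $T$, contradicting Pareto-optimality. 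For (i), suppose $(s,t)\notin T$; by (ii) every edge of $T$ is incident to $s$ or $t$, so no two $u_k$'s are adjacent in $T$, and therefore the unique $s$--$t$ path of $T$ (whose internal vertices are all $u_k$'s) must have the form $s,u_k,t$ for some $k$, i.e. $(s,u_k),(t,u_k)\in T$. Deleting $(s,u_k)$ separates $s$ from $u_k$, hence from $t$ as well (since $t$ stays joined to $u_k$ through $(t,u_k)$), so $T''=(T-(s,u_k))+(s,t)$ is a spanning tree, and in every coordinate $v^{(\ell)}(T'')-v^{(\ell)}(T)=v^{(\ell)}((s,t))-v^{(\ell)}((s,u_k))>(1/2)-(1/2)=0$, again contradicting Pareto-optimality.

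Finally, let $T$ be Pareto-optimal. By (i) it contains $(s,t)$, and by (ii) each of its other $n-2$ edges is of the form $(s,u_k)$ or $(t,u_k)$. In $T$ the only edges that can be incident to the vertex $u_k$ are $(s,u_k)$ and $(t,u_k)$, so connectivity forces each of the $n-2$ vertices $u_1,\dots,u_{n-2}$ to be incident to at least one of these; as there are exactly $n-2$ such edges in $T$, each $u_k$ is incident to exactly one of $(s,u_k),(t,u_k)$. This is precisely the defining property of $\mathcal{T}$, so $T\in\mathcal{T}$.

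The main obstacle is the exchange bookkeeping in steps (i) and (ii): one must check that the inserted edge is genuinely a non-tree edge bridging the two components produced by the deletion (so that the new graph is again a spanning tree) and that the profit strictly improves in all $d$ coordinates simultaneously --- the latter being exactly where the (almost sure) strict separation of the three profit intervals is used. Everything else is routine graph surgery and bookkeeping.
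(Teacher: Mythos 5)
Your proposal is correct and takes essentially the same route as the paper's proof: an exchange argument driven by the tiered profit intervals, first swapping out any edge outside $G$ and then swapping $(s,t)$ in, the only differences being that the paper performs the first exchange in bulk and leaves your final counting step implicit. Your restriction to the almost-sure event where the three intervals separate strictly is harmless --- the paper's ``strictly heavier'' step needs the same caveat at the interval endpoints --- and, as you note, it suffices for the expectation bound of Theorem~\ref{thm:trees}.
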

\begin{proof}
Fix any choice of profits as above. Suppose that a tree $T'$ is Pareto-optimal but $T' \notin \mathcal{T}$. Then (1) either $T'$ has an edge $e$ outside $E(G)$, or (2) all its edges are from $E(G)$ but it does not use edge $(s,t)$. In case (1), remove the edges from $T'$ that are not in $E(G)$, and then complete the remaining disconnected graph to a spanning tree using edges from $E(G)$. Clearly, the resulting tree is heavier than $T'$ in each of the $d$ weights. In case (2), add edge $(s,t)$ to $T'$, and from the resulting cycle remove some edge other than $(s,t)$. Again, the resulting tree is heavier than $T'$ in each of the $d$ weights.
\end{proof}
In the rest of the proof, $i$ will range over $[d]$.
The $i$'th profit of a spanning tree $T \in \mathcal{T}$, which we will denote by $v^{(i)}(T)$, can be written as follows
\begin{align*}
v^{(i)}(T) = v^{(i)}(st) + \sum_{j=1}^{n-2} (v^{(i)}(s,u_j) x_j + v^{(i)}(t,u_j) (1-x_j)),
\end{align*}
where $x_j = x_j(T) = 1$ if edge $(s, u_j)$ is in the tree and $x_j = 0$ otherwise. We have
\begin{align*}
(v^{(i)}(s,u_j) x_j + v^{(i)}(t,u_j) (1-x_j)) = \frac{v^{(i)}(s,u_j)+v^{(i)}(t,u_j)}{2} + (v^{(i)}(s,u_j)-v^{(i)}(t,u_j))(x_j-\frac{1}{2}).
\end{align*}
Now, to compute the lower bound on the expected size of the Pareto set we reveal the $v$'s in two steps: First we reveal $(v^{(i)}(s,u_j)+v^{(i)}(t,u_j))$ for all $u_j$. Then the conditional distribution of each $(v^{(i)}(s,u_j)-v^{(i)}(t,u_j))$ is symmetric (but can be different for different $i$). Thus the $i$'th profit of $T \in \mathcal{T}$ is $v^{(i)}(T) = \sum_{i \in [n-2]} (v^{(i)}(s,u_j)-v^{(i)}(t,u_j)) (x_j-1/2) + A^{(i)}$, where $A^{(i)} =  v^{(i)}(s,t) + \sum_{j \in [n-2]}\frac{v^{(i)}(s,u_j)+v^{(i)}(t,u_j)}{2}$. Since $A^{(i)}$ is common to all trees, only the first sum in the profit matters in determining Pareto-optimality. Now we are in the situation dealt with by Cor.~\ref{cor:-11}: For each fixing of $(v^{(i)}(s,u_j)+v^{(i)}(t,u_j))$, we get an instance of Cor.~\ref{cor:-11}, and thus a lower bound of $(\frac{n-3}{2(d-1)})^{d-1}$. Since this holds for each fixing of $(v^{(i)}(s,u_j)+v^{(i)}(t,u_j))$, we get that the same lower bound holds for the expectation without conditioning.
\end{proof}

\section{$0$-$1$ Knapsack}
We prove Theorem~\ref{thm:basicknapsack}.
\begin{proof}
To show our lower bound we will use the obvious one-to-one map between our basic problem with $d$ objectives and the profits of the knapsack problem: Let $v^{(1)}, \ldots, v^{(d)}$ be an instance of our basic problem with all the $v^{(i)}_j$ being chosen uniformly at random from $[-1/2,1/2]$. Now the profits $p$ are obtained from the $v$'s in the natural way: $p^{(i)}_j = v^{(i)}_j + 1/2$. In general, the set of Pareto optima for these two problems (the basic problem instance and its corresponding knapsack instance) are not the same.
We will focus instead on the better behaved set $\calS \subseteq \{0,1\}^n$ of solutions having exactly $\floor{n/2}$ ones. From Corollary~\ref{cor:S} we get that, in the basic problem restricted to $\calS$, the expected number of Pareto optima is at least $\Omega_d(n^{d-1.5})$ (using the well-known approximation ${n \choose \floor{n/2}} =\Theta( 2^n/\sqrt{n})$).

Now we claim that if $x \in \calS$ is Pareto-optimal in the restricted basic problem, then it is also Pareto-optimal in the corresponding (unrestricted) knapsack problem. Let $y \in \{0,1\}^n$ be different from $x$. There are two cases: If $y$ has more than $\floor{n/2}$ ones, then it cannot dominate $x$, as $y$ has a strictly higher weight (recall that all the weights are $1$). If $y$ has at most $\floor{n/2}$ ones, then enlarge this solution arbitrarily to a solution $y' \succeq y$ with exactly $\floor{n/2}$ ones. The maximality of $x$ implies that $y'$ is worse in some profit, and so is $y$, as the profits are non-negative.
%
\end{proof}
\section{Improved lower bound in the semi-random model}
We prove Theorem~\ref{thm:phi}.
\begin{proof}
We only describe the differences with the argument in the proof of
\cite[Theorem 8]{BRoglin11}.
As given by Theorem \ref{thm:basicknapsack} (but scaling the profits by $1/\phi$, which does not change the set of Pareto-optima), we start with a distribution on knapsack instances with $d$ profits and $n_p$ objects (to be determined later) having unit weights and profits uniformly distributed in $[0, 1/\phi]$, and expected number of Pareto-optima at least $\Omega(n^{d-1.5})$. We use $n_q$ (to be determined later) ``cloning steps''. Each step introduces $d$ new objects while multiplying the number of Pareto-optima by at least $2^d/d$. As in \cite{BRoglin11}, objects used by the splitting step can have profits that are larger than 1, therefore they are split into many objects with profits distributed in $[0,1]$, and a suitable choice of the set $\calS$ ensures that objects representing the splitted version of another behave as a group.

A simple modification of the argument leading to \cite[Corollary 11]{BRoglin11}, using our base case with $n_p$ objects described in the previous paragraph instead of their base case with 1 object, implies that the expected number of Pareto-optima of the constructed instance is $\Omega(n_p^{d-1.5} (2^d/d)^{n_q})$.

Now we need to choose values of $n_p$ and $n_q$ to get a bound in terms of $n$. By \cite[Lemma 11]{BRoglin11}, the total number of objects is at most
\begin{equation}\label{equ:objects}
n_p + dn_q + \frac{2 d^2}{\phi -d} \left( \frac{2\phi}{\phi - d} \right)^{n_q}
\end{equation}
We choose $n_q$ so that the second term is no more than $n/4$ for $n$ and $\phi$ sufficiently large. Such a choice of $n_q$ is given by $n_q = \floor{\hat n_q} $ with
\[
\hat n_q = \frac{\log \phi}{\log \frac{2\phi}{\phi -d}}
\]
when $4 d^2 \leq n/4$ and $\phi \geq 2d$.

Clearly there can be no more than $2^n$ Pareto optima, and therefore there must be a point where increasing $\phi$ does not increase the lower bound. Say, for
\begin{equation}\label{equ:phi}
\phi \leq \left( \frac{2 \phi}{\phi-d}\right)^{n/2d}
\end{equation}
we have that the second term in \eqref{equ:objects} is no more than $n/2$. Finally, choosing $n_p = \floor{n/2}$ ensures that \eqref{equ:objects} is at most $n$.

As explained in the first paragraph, the expected number of Pareto-optima of the whole construction is at least
\begin{align*}
\Omega\left( n_p^{d-1.5} \left(\frac{2^d}{d}\right)^{n_q}\right) \geq \Omega \left( n_p^{d-1.5} \left(\frac{2^d}{d}\right)^{\hat n_q}\right) \geq \Omega( n^{d-1.5} \phi^{(d - \log d)(1- \Theta(1/\phi))}).
\end{align*}
When $\phi$ violates \eqref{equ:phi}, we construct the same instance as above with maximum density equal to the unique $\hat \phi$ satisfying $\hat \phi = \left( \frac{2 \hat \phi}{\hat \phi-d}\right)^{n/2d}$ ($\hat \phi$ is about $2^{n/2d}$). We get $\hat n^q = n/2d$ and, as before, the expected number of Pareto-optima is at least
\[
\Omega\left( n_p^{d-1.5} \left(\frac{2^d}{d}\right)^{n_q}\right) \geq \Omega\left( n^{d-1.5} \left(\frac{2^d}{d}\right)^{n/2d}\right) \geq \Omega(2^{\Theta(n)})
\]
\end{proof}
\section{Discussion and Conclusion}
We proved lower bounds for the average and smoothed number of Pareto optima by introducing geometric arguments to this setting. Our lower bound is of the form $\Omega(n^{d-1})$, ignoring the dependence on $\phi$. The best upper bound we know, even for $\phi =1$, is that of Moitra and O'Donnell~\cite{MoitraO11} which is of the form $O(n^{2d-2})$, again ignoring the dependence on $\phi$. Thus there is a gap between the upper and lower bounds. As mentioned before, the number of Pareto optima for the case when $2^n$ points are chosen uniformly at random from $[-1,1]^d$ is $\Theta_d(n^{d-1})$.




Do lower bounds similar to ours hold for any sufficiently large feasible set $\calS$? Our techniques can show this
for natural objectives, but require arguments tailored to the specific objective.
 It is desirable to have a general lower bound technique that works for all sufficiently large $\calS$. Also, in smoothed lower bounds, to get a good dependence on $\phi$ we need to use the technique of $\cite{BRoglin11}$, which requires a very special choice of $\calS$. So, a more general question is whether we can prove lower bounds with strong dependence on $\phi$ for all sufficiently large $\calS$.

We now briefly discuss some difficulties in proving lower bounds for general $\calS$. One approach
to this end is to show a lower bound on the expected size of the Pareto set that depends only on $\card{\calS}$, $n$ and $d$.
Our general technique was to first reduce the problem to lower bounding the expected number of vertices in the
projection of the convex hull of the points in $\calS$ to a random subspace of dimension $d$. 
A special distribution
which is instructive to consider here, and also interesting in its own right, is given by the case when we
project to a $d$-dimensional space chosen uniformly at random. The expected number of vertices in the projection has been
studied for the special cases of the simplex, the cube, and
the crosspolytope (see Schneider~\cite{SchneiderSurvey}). But understanding this number for arbitrary 0/1-polytopes seems
difficult.
When the subspace to be projected to is of dimension $n-1$, we can write the expected
number of vertices in the projection as $C \cdot \sum_{v \in V} a(v)$, where $a(v)$ is the solid angle of the cone polar to the
tangent cone at vertex $v$, and $C$ is a constant depending on $n$. (Suitable generalizations of this formula are easy
to obtain for projection to dimensions smaller than $n-1$, but the case of dimension $n-1$ is sufficient
for our purpose here.) This captures the intuitive fact that if the polytope is very pointy at vertex $v$, then $v$ is more
likely to be a vertex in the convex hull.
It is natural to ask: Given $k$, what is the $\calS \subseteq \{0,1\}^n$ with $\card{\calS}= k$ that minimizes this expectation?
Intuitively, the sum of angles $a(v)$ could be minimized when the vertices are close together, as in a Hamming ball.
Note the high-level similarity of the problem at hand to the edge-isoperimetric inequality for the Boolean cube. Unfortnately, our numerical experiments show that this is not the case: Hamming balls are not the minimizers of the expected number of vertices of a random projection.


\section{Acknowledgements}
We thank Yusu Wang for helpful discussions.

\bibliography{pareto}

\begin{thebibliography}{10}

\bibitem{BeierRV07}
R.~Beier, H.~R{\"o}glin, and B.~V{\"o}cking.
\newblock The smoothed number of pareto optimal solutions in bicriteria integer
  optimization.
\newblock In {\em IPCO}, pages 53--67, 2007.

\bibitem{BeierV04RK}
R.~Beier and B.~V{\"o}cking.
\newblock Random knapsack in expected polynomial time.
\newblock {\em J. Comput. Syst. Sci.}, 69(3):306--329, 2004.

\bibitem{BeierV04}
R.~Beier and B.~V{\"o}cking.
\newblock Typical properties of winners and losers in discrete optimization.
\newblock In {\em STOC}, pages 343--352, 2004.

\bibitem{Bentley}
J.~L. Bentley, H.~T. Kung, M.~Schkolnick, and C.~D. Thompson.
\newblock On the average number of maxima in a set of vectors and applications.
\newblock {\em J. ACM}, 25(4):536--543, 1978.

\bibitem{BRoglin11}
T.~Brunsch and H.~Roeglin.
\newblock Lower bounds for the smoothed number of pareto optimal solutions.
\newblock In {\em TAMC}, 2011.

\bibitem{BRoglin112}
T.~Brunsch and H.~Roeglin.
\newblock Personal communication.
\newblock June, 2011.

\bibitem{Buchta}
C.~Buchta.
\newblock On the average number of maxima in a set of vectors.
\newblock {\em Inf. Process. Lett.}, 33(2):63--65, 1989.

\bibitem{Devroye}
L.~Devroye.
\newblock A note on finding convex hulls via maximal vectors.
\newblock {\em Inf. Process. Lett.}, 11(1):53--56, 1980.

\bibitem{donoho}
D.~Donoho and J.~Tanner.
\newblock Counting the faces of randomly-projected hypercubes and orthants,
  with applications.
\newblock {\em Discrete \& Computational Geometry}, 43:522--541, 2010.
\newblock 10.1007/s00454-009-9221-z.

\bibitem{goodman2004handbook}
J.~Goodman and J.~O'Rourke.
\newblock {\em {Handbook of discrete and computational geometry}}.
\newblock Discrete mathematics and its applications. Chapman \& Hall/CRC, 2004.

\bibitem{GrandoniRS09}
F.~Grandoni, R.~Ravi, and M.~Singh.
\newblock Iterative rounding for multi-objective optimization problems.
\newblock In {\em ESA}, pages 95--106, 2009.

\bibitem{MoitraO11}
A.~Moitra and R.~O'Donnell.
\newblock Pareto optimal solutions for smoothed analysts.
\newblock In {\em STOC}, 2011.

\bibitem{Muller-HannemannW01}
M.~M{\"u}ller-Hannemann and K.~Weihe.
\newblock Pareto shortest paths is often feasible in practice.
\newblock In {\em Algorithm Engineering}, pages 185--198, 2001.

\bibitem{NemhauserU69}
G.~Nemhauser and Z.~Ullmann.
\newblock Discrete dynamic programming and capital allocation.
\newblock {\em Management Science}, 15(9):494--505, 1969.

\bibitem{MR805539}
F.~P. Preparata and M.~I. Shamos.
\newblock {\em Computational geometry}.
\newblock Texts and Monographs in Computer Science. Springer-Verlag, New York,
  1985.
\newblock An introduction.

\bibitem{RoglinT09}
H.~R{\"o}glin and S.-H. Teng.
\newblock Smoothed analysis of multiobjective optimization.
\newblock In {\em FOCS}, pages 681--690, 2009.

\bibitem{SchneiderSurvey}
R.~Schneider.
\newblock Recent results on random polytopes.
\newblock {\em Boll. Un. Mat. Ital.}, Ser. (9), 1:17--39, 2008.

\bibitem{schneider}
R.~Schneider and W.~Weil.
\newblock {\em Stochastic and integral geometry}.
\newblock Probability and its Applications (New York). Springer-Verlag, Berlin,
  2008.

\bibitem{SpielmanT01}
D.~A. Spielman and S.-H. Teng.
\newblock Smoothed analysis of algorithms: why the simplex algorithm usually
  takes polynomial time.
\newblock In {\em STOC}, pages 296--305, 2001.

\bibitem{MR0146858}
J.~G. Wendel.
\newblock A problem in geometric probability.
\newblock {\em Math. Scand.}, 11:109--111, 1962.

\end{thebibliography}
\bibliographystyle{abbrv}    
\end{document}